\newlength\figureheight
\newlength\figurewidth
\newcommand{\setsmallfigwidth}{\renewcommand{\figurewidth}{0.7\columnwidth}}
\newcommand{\setsmallfigheight}{\renewcommand{\figureheight}{0.6\columnwidth}}
\newcommand{\settinyfigheight}{\renewcommand{\figureheight}{0.4\columnwidth}}
\def\E{\mathbb{E}}
\def\H{\mathbf{H}}
\newacro{BIAWGN}{binary additive white Gaussian noise}
\newacro{BEC}{binary erasure channel}
\newacro{BMS}{binary-input memoryless symmetric}
\newacro{BP}{Belief Propagation}
\newacro{CN}{check node}
\newacro{DD}{degree distribution}
\newacro{DE}{density evolution}
\newacro{EXIT}{extrinsic information transfer}
\newacro{FPT}{first-passage time}
\newacro{LDPC}{low-density parity-check}
\newacro{MAP}{maximum-a-posteriori}
\newacro{MET}{multi-edge type}
\newacro{OU}{Ornstein-Uhlenbeck}
\newacro{RV}{random variable}
\newacro{PD}{peeling decoding}
\newacro{PPD}{parallel peeling decoding}
\newacro{SC-LDPC}{spatially-coupled low-density parity-check}
\newacro{SPD}{sequential peeling decoding}
\newacro{VN}{variable node}
\newcommand{\BIAWGN}{\ac{BIAWGN}\xspace}
\newcommand{\BEC}{\ac{BEC}\xspace}
\newcommand{\BMS}{\ac{BMS}\xspace}
\newcommand{\BP}{\ac{BP}\xspace}
\newcommand{\cdf}{cdf\xspace}
\newcommand{\CN}{\ac{CN}\xspace}
\newcommand{\CNs}{\acp{CN}\xspace}
\newcommand{\DD}{\ac{DD}\xspace}
\newcommand{\DE}{\ac{DE}\xspace}
\newcommand{\degone}{deg-1\ }
\newcommand{\FPT}{\ac{FPT}\xspace}
\newcommand{\MAP}{\ac{MAP}\xspace}
\newcommand{\LDPC}{\ac{LDPC}\xspace}
\newcommand{\OU}{\ac{OU}\xspace}
\newcommand{\pdf}{pdf\xspace}
\newcommand{\RVs}{\acp{RV}\xspace}
\newcommand{\PD}{\ac{PD}\xspace}
\newcommand{\PPD}{\ac{PPD}\xspace}
\newcommand{\SCLDPC}{\ac{SC-LDPC}\xspace\acused{LDPC}}
\newcommand{\SPD}{\ac{SPD}\xspace}
\newcommand{\VN}{\ac{VN}\xspace}
\newcommand{\VNs}{\acp{VN}\xspace}
\newcommand{\figref}[1]{Fig.~\ref{#1}}
\renewcommand{\v}[1]{\ensuremath{\mathbf{#1}}}
\newcommand{\M}[1]{\ensuremath{\mathbf{#1}}}
\newcommand{\Set}[1]{\ensuremath{\mathcal{#1}}}
\renewcommand{\d}{\ensuremath{\mathrm{d}}}
\newcommand{\e}{\ensuremath{\operatorname{e}}}
\newcommand{\N}{\ensuremath{\mathbb{N}}}
\newcommand{\R}{\ensuremath{\mathbb{R}}}
\newcommand{\pe}{\ensuremath{\epsilon}\xspace}		
\newcommand{\rate}{\ensuremath{\mathtt{r}}}
\newcommand{\n}{\ensuremath{\mathtt{n}}}	
\newcommand{\code}[1]{\ensuremath{\mathcal{#1}}}
\renewcommand{\L}[1]{\ensuremath{L_{#1}}}	
\renewcommand{\R}[1]{\ensuremath{R_{#1}}}	
\renewcommand{\l}{l}				
\renewcommand{\r}{r}				
\newcommand{\F}{\ensuremath{\mathcal{F}}}	
\newcommand{\lrL}{\ensuremath{{(l,r,L)}}\xspace}
\newcommand{\lr}{\ensuremath{(l,r)}\xspace}
\newcommand{\lrl}{\ensuremath{(l,r,L)}\xspace}
\newcommand{\3}[1]{\ensuremath{{(3,6,#1)}}\xspace}
\newcommand{\4}[1]{\ensuremath{{(4,8,#1)}}\xspace}
\newcommand{\GP}[2]{\ensuremath{\mathcal{N}\left(#1,#2\right)}}
\newcommand{\Prob}[1]{\ensuremath{P\left(#1\right)}}
\newcommand{\probsub}[2]{\ensuremath{p_{#1}\left(#2\right)}}
\newcommand{\Ex}[1]{\ensuremath{\operatorname{\mathbb{E}}\left[#1\right]}}
\newcommand{\Var}[1]{\ensuremath{\operatorname{Var}\left[#1\right]}}
\newcommand{\Cov}[1]{\ensuremath{\operatorname{Cov}\left[#1\right]}}
\newcommand{\msg}[2]{\ensuremath{\mu_{{#1}\rightarrow {#2}}}}
\newcommand{\vpdir}{\ensuremath{ \v{p}_{\text{dir}} }}
\newcommand{\Pdir}[1]{\ensuremath{ P^{\text{dir}}_{{#1}}(\ell) }}
\newcommand{\vpindir}[1]{\ensuremath{ \v{p}_{#1,\text{indir}} }}
\newcommand{\Pindir}[1]{\ensuremath{ P_{#1,\text{indir}}(\ell) }}
\newcommand{\Eminus}[1]{\ensuremath{ E^{-}_{#1}(\ell) }}
\newcommand{\Eplus}[1]{\ensuremath{ E_{#1}^+(\ell) }}
\newcommand{\Neighborsset}[1]{\ensuremath{ \mathcal{N}_{#1} }}
\renewcommand{\S}{\ensuremath{\mathcal{S}}\xspace}	
\newtheoremstyle{def}
  {2ex}
  {2ex}
  {\normalfont\itshape}
  {}
  {\normalfont\bfseries}
  {\newline}
  { }
  {\thmname{#1}\thmnumber{ #2} (\thmnote{#3})}
\newtheoremstyle{the}
  {2ex}
  {2ex}
  {\normalfont\itshape}
  {}
  {\normalfont\bfseries}
  {\newline}
  { }
  {\thmname{#1}\thmnumber{ #2}}
\newtheoremstyle{col}
  {2ex}
  {2ex}
  {\normalfont}
  {}
  {\normalfont\bfseries}
  {\newline}
  { }
  {\thmname{#1}\thmnumber{ #2}}
\theoremstyle{def}
\newtheorem{definition}{Definition}[section]
\theoremstyle{the}
\newtheorem{theorem}{Theorem}[section]
\theoremstyle{the}
 \renewcommand\appendix{\par 
   \setcounter{chapter}{0}
   \setcounter{section}{0}%
   \setcounter{subsection}{0}%
   \setcounter{figure}{0}%
   \renewcommand\thechapter{\Alph{chapter}}%
   \renewcommand\thefigure{\Alph{section}.\arabic{figure}}
 }
\begin{document}

\title{Finite-length scaling based on belief propagation\\ for spatially coupled LDPC codes \vspace{-2mm}}
\author{
\IEEEauthorblockN{Markus Stinner}
\IEEEauthorblockA{
	Technical University of Munich, Germany\\
	\texttt{markus.stinner@tum.de}
	}
\and
\IEEEauthorblockN{Luca Barletta}
\IEEEauthorblockA{
	Politecnico di Milano, Italy\\
	\texttt{luca.barletta@polimi.it}
	}
\and
	\IEEEauthorblockN{Pablo M. Olmos}
\IEEEauthorblockA{
	Universidad Carlos III de Madrid, Spain\\
	\texttt{olmos@tsc.uc3m.es}
	}
}
\maketitle

\IEEEpeerreviewmaketitle

\begin{abstract}
The equivalence of \PD and \BP for \LDPC codes over the \acl{BEC} is analyzed.
Modifying the scheduling for \PD, it is shown that exactly the same \VNs are resolved in every iteration than with \BP.
The decrease of erased \VNs during the decoding process is analyzed instead of resolvable equations. 
This quantity can also be derived with density evolution, resulting in a drastic decrease in complexity.
Finally, a scaling law using this quantity is established for spatially coupled \LDPC codes. 
\end{abstract}
\begin{keywords}
finite-length performance, spatially-coupled LDPC codes
\end{keywords}
\section{Introduction}

\acresetall
Recently, it was shown that \SCLDPC codes can achieve the channel capacity of \BMS channels under \BP decoding \cite{KudekarBMSIT, Kumar14}.\acused{LDPC}
The Tanner graph of a block code with $M$ \VNs, referred to as the uncoupled \LDPC code graph, is duplicated $L$ times to produce a sequence of identical graphs, where $L$ is the chain length of the \SCLDPC code.
The different copies are connected to form a chain by redirecting (spreading) certain edges.
The asymptotic analysis of  \SCLDPC code ensembles shows that they exhibit a \BP threshold close to the \MAP  threshold of the uncoupled LDPC code ensemble for sufficiently large $L$ \cite{Lentmaier2010a}.
In addition, \SCLDPC code ensembles can be designed with a linear growth of the minimum distance with $M$ \cite{Mitchell11-2}.
Indeed, the minimum distance growth rate for the coupled \LDPC ensemble is often better than for the uncoupled ensemble \cite{Mitchell2014}. Several families of \SCLDPC code ensembles are compared in \cite{Mitchell2014,ks11} using asymptotic arguments, namely \BP threshold and minimum distance growth rate.

{\let\thefootnote\relax\footnotetext{
Markus Stinner and Luca Barletta were supported by an Alexander von Humboldt Professorship endowed by the
German Federal Ministry of Education and Research.
Pablo M. Olmos  was supported  by Spanish government MEC TEC2012-38800-C03-01 and by Comunidad de Madrid (project 'CASI-CAM-CM', id. S2013/ICE-2845).
He works also with the Gregorio Mara\~n\'on Health Research Institute.
}} 

The performance of finite-length \LDPC codes over a \BEC using a \BP decoder is analyzed in \cite{Luby01, Amraoui2009} by studying an alternative decoder, namely \PD.
Following this approach, the finite-length performance of \SCLDPC code ensembles over the \BEC has recently been analyzed in \cite{OlmosPabloM2014,Stinner2015a}.
To the authors knowledge, the single attempt to generalize the \PD-based finite-length analysis to a general message passing \BP decoder is due to Ezri, Montanari and Urbanke in \cite{Ezri2007,Ezri08}, where scaling laws are conjectured for \lr-regular \LDPC ensembles for the \BIAWGN channel and scaling parameters are derived from the correlation of messages sent within the decoder.
However, applying this analysis to \SCLDPC codes is prohibitively complex.

In this work, we present a finite-length analysis approach based on \BP similar to the one base on \PD in \cite{Amraoui2009}. 
After showing the equivalence of \BP and a particular form of \PD with modified scheduling, we replace the analyzed random process used to predict the probability of a decoding failure and examine the decrease of erased \VNs per iteration during the decoding process.
We show that our approach predicts the waterfall performance correctly.
Our substitute can also be derived from \BP \DE, which has significantly lower complexity than graph evolution for \PD.

After introducing \PPD, its equivalence to \BP is shown and we derive a graph evolution for \PPD applied to protograph-based \SCLDPC codes.
The properties shown in \cite{Stinner2015a} are discussed using \PPD.
We then introduce and analyze the decrease of unresolved \VNs per iteration and compare it with the number of resolvable \CNs available in each iteration.
Finally, we establish a scaling law based on the decrease of unresolved \VNs and verify the results with simulations.

This paper is structured as follows.
Section \ref{sec:construction} introduces construction and notation.
In Section \ref{sec:decoders}, we introduce a modified form of \PD called \PPD and \BP and discuss their equivalence.
After deriving the graph evolution for \PPD, we show exemplary that properties and differences between code ensembles observed with standard \PD can still be observed from \PPD. 
The analysis of the decoding trajectory based on resolved \VNs per iteration for \PPD and \BP is discussed in Section \ref{sec:deg1vsmean}.
In Section \ref{sec:SM}, we establish a scaling law based on this random process.



\section{\SCLDPC Code Constructions}\label{sec:construction}
We introduce two types of constructions, randomly constructed regular codes as proposed in \cite{KudekarBMSIT,OlmosPabloM2014}, and a construction based on protographs \cite{Stinner2015a}.
After defining the \BEC, we define the residual graph degree distribution after transmission.
\setcounter{paragraph}{0}

We denote vectors and matrices with $\v{v}=(v_1,v_2,\dots,v_n)$ and \M{M}, respectively.
Extending the notation for unit vectors, $\v{e}_{ i,j,k}$ is a vector where all entries are zero except the entries at positions $i$,$j$, and $k$ which are $1$, whereas \v{1} denotes a vector where all entries are ones.
We also define $\v{a}^\v{v}=a_1^{v_1}a_2^{v_2}\dots$, where the exponent is set to $0$ for any $v_i\leq 0$. 
$|\v{v}|=\sum_{i=1}^n v_i$ as $L^1$ norm is the sum of all entries of $\v{v}$.
Denote by $X=(X_1,\dots,X_n)$ a vector of $n$ \RVs. 

The message passed from \CN $c$ to \VN $v$ in iteration $\ell$ is denoted with $\msg{c}{v}(\ell)$.
We denote the set of \VNs (\CNs) connected to a specific \CN $c$ (\VN $v$) with \Neighborsset{c} (\Neighborsset{v}).

\paragraph{Randomly Constructed \texorpdfstring{$\lrL_u$}{lrLu} \SCLDPC Codes}
Let there be $L$ uncoupled \lr regular \LDPC codes where $l$ is the \VN degree and $r$ is the \CN degree, $r\leq l$, $\tfrac{l}{r}\in\N$.
Each of the $L$ \lr regular \LDPC codes has $M$ \VNs and $\tfrac{l}{r}M$ \CNs and the codes are placed at $L$ consecutive positions.
The $\lrL_u$ code is obtained by spreading $l$ edges of each \VN along consecutive positions, so that each \VN at position $u$ is connected to a \CN at positions $u, \dots, u+l-1$ and a chain of connected codes is obtained as depicted with an example in \figref{fig:sc-36-random}.
When the \CNs at each position are chosen at random, their maximum degree is fixed to $r$.
However, there is randomness in the number of connections to \VNs of a specific position.
Note that there are $(l-1)$ additional positions at the end of the chain of coupled codes without any \VNs but with \CNs connected to \VNs of codes on previous positions of the chain.
For large $L$, the code rate tends to $\rate_{\lrL_u}=1-\tfrac{l}{r}$. 

\begin{figure}[htb]
\footnotesize
\centering
\includegraphics{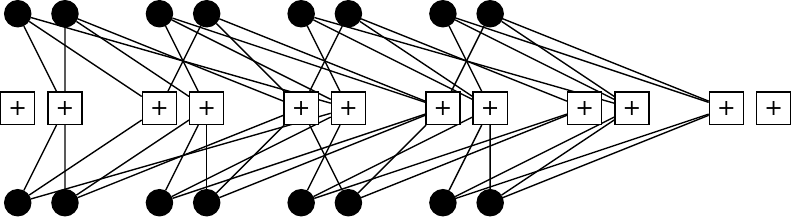}
\caption{Construction of the $\lrL_u=(3,6,4)_u$ coupled protograph with $M=4$ \VNs per code.
The \CNs at positions $3$ and $4$ are regular and have degree $r=6$.}
\label{fig:sc-36-random}
\end{figure}

\paragraph{Protograph-based \lrL SC-LDPC Codes}
The protographs as proposed by Thorpe in \cite{Thorpe2005} are first copied $N$ times before edges of the same type are permuted to avoid small cycles in the resulting code.
Such a protograph can be represented compactly by its bi-adjacency matrix $\M{B}$, called the \emph{base matrix}.
Every $1$ in \M{B} is replaced by an $N\times N$ permutation matrix\footnote{Entries $>1$ in $\M{B}$, represent multiple edges between a pair of specific node types. These entries are replaced by a sum of $N\times N$ permutation matrices.}.
With $v$ \VNs in the protograph, $M=Nv$ \VNs are obtained.
All possible matrices \M{H} derived from all possible combinations of $N\times N$ permutation matrices give a code ensemble.
The \emph{design rate} $\rate$ of this code ensemble can be directly computed from the protograph since the Tanner graph of $\mathbf{H}$ inherits the degree distribution (DD) and graph neighborhood structure of the protograph.

\begin{figure}[htb]
\footnotesize
\centering
\includegraphics{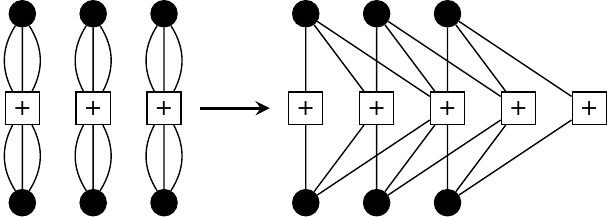}
\caption{Construction of the $\lrL=(3,6,3)$ coupled protograph.}
\label{fig:sc-36}
\end{figure}

We couple \lr-regular LDPC codes according to \cite{Lentmaier2010a} with $\frac{r}{l}=k\in\mathbb{N}$.
$L$ protographs are then connected to an \lrl coupled protograph by connecting each \VN at position $u$, $1\leq u\leq L$ to the \CNs at  positions $u,\dots,u+l-1$ as shown in \figref{fig:sc-36} for $L=3$.
Each uncoupled protograph has $v=k$ \VNs and one \CN so that we obtain $M=2N$ \VNs per coupled code after lifting the construction.
We obtain a code length $\n= kLN$ bits and obtain a code rate of 
$\rate_{\lrl}=1-\tfrac{(L+l-1)}{kL}$.



\subsection{The Binary Erasure Channel}\label{sec:transmission:bec}
Denote by $X(t)\in\{0,1\}$ the binary channel input at a discrete time $t$ and the corresponding channel output $Y(t)\in\{0,1,\Delta\}$ where $\Delta$ denotes an erasure.
We drop the indices for time where possible and use $s$ if $Y\in\{0,1\}$ is known and solved.
A symbol is erased during transmission with probability $\epsilon$ so that we have $\Prob{Y(t)=\Delta}=\epsilon$.
\begin{figure}[htb]
\centering
\includegraphics{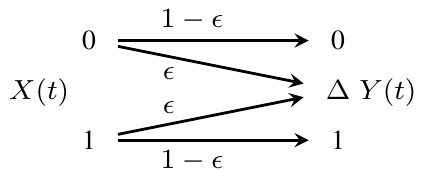}
\caption{The binary erasure channel at time instance $t$.}
\label{fig:bec}
\end{figure}
With uniformly distributed input $X$, i.e. $\Prob{X=1}=\Prob{X=0}=\tfrac{1}{2}$, the capacity of the \BEC is $C_{\text{\BEC}}=H(Y)-H(Y|X)=1-\epsilon$.

\subsection{Degree Distribution of the Residual Graph After Transmission}

After transmitting \VNs $v_1,\dots,v_n$ over a \BEC, a certain fraction of \VNs is erased as illustrated in \figref{fig:24transmission}.
The graph of edges connected to erased \VNs is often called residual graph since it represents the set of parity equations which remain to be solved to recover the whole codeword.

\begin{figure}[htb]
\centering
\includegraphics{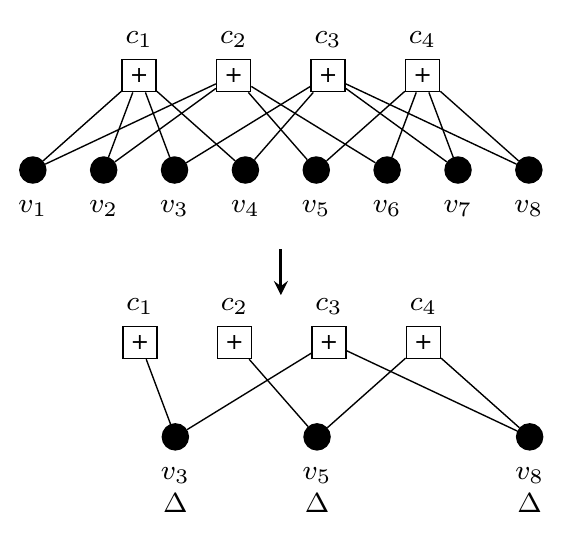}
\caption{Residual graph after transmission over a BEC.}
\label{fig:24transmission}
\end{figure}

To define the \DD, we label each edge in the protograph connecting a different pair of nodes with $1$ to $m$.
In the Tanner graph representation of the parity check matrix \M{H} of a code generated by a protograph, we denote the type a particular edge was copied from with $j\in\{1,2,\ldots,m\}$.
We denote the type of a \VN with $\v{v}=(d_1,\dots,d_m)$, where $d_j\in\N_0$ represents the number of edges of type $j$ connected to this \VN type.
Similarly, we define the type of a \CN by $\v{c}$ and represent the number of \VNs (\CNs) of  type $\v{v}$  ($\v{c}$) in the Tanner graph of $\H$ with $\L{\v{v}}$ ($\R{\v{c}}$).
The set of \VN (\CN) types in the graph is denoted by $\F_v$ ($\F_c$).

\figref{fig:residualtypes} shows the labeling of an uncoupled $(2,4)$  \LDPC ensemble as an example and possible outcomes of \CNs in the residual graph after transmission.
Observe that there are two edges of type $1$ and two of type $2$.  
As discussed in \cite{Stinner2015a}, many combinations of known and unknown edges are possible for a \CN type before transmission. 
We denote the set of \CN types after transmission with $\overline{\F}_c$.
Note that there are no additional \VN types after transmission so that the set of \VN types in the residual graph is still $\F_v$.

\begin{figure}[htb]
\centering
\includegraphics{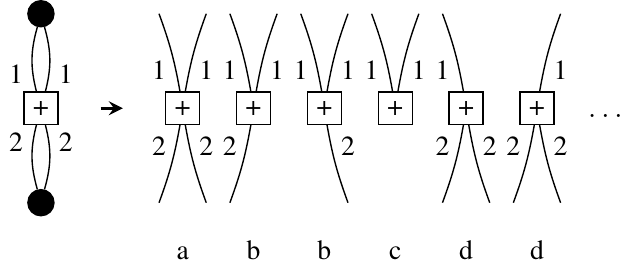}
\caption{CN types after transmission.}
\label{fig:residualtypes}
\end{figure}
\section{Decoders for the Binary Erasure Channel}\label{sec:decoders}
Consider transmission over a \BEC with erasure probability~$\epsilon$.
In this section, \SPD, \PPD and \BP are introduced.
We formulate \SPD and \PPD in terms of messages sent, and split the messages sent from \VNs to \CNs into \emph{forward} messages if the \CN is connected to the residual graph, and \emph{backward} messages if all other \VNs connected to the respective \CN are already resolved.

\subsection{Sequential Peeling Decoder}
All known \VNs of the graph and their connected edges are removed to obtain the residual graph.
A \degone \CN is a \CN of this residual graph with only one connected unknown \VN.
In every iteration $\ell$, a single \degone \CN $c$ is chosen and the connected \VN $v$ is resolved.
We remove $v$ and all adjacent edges to \CNs $c'\in\Neighborsset{v}$ from the residual graph.
To calculate messages, we have 
\begin{align}
  \msg{c}{v}(\ell)=\begin{cases}
              s, \text{ if all } \msg{v'}{c}(\ell)=s, v'\in\Neighborsset{c}\setminus\{v\}\\
              \Delta,\, \text{else,}
             \end{cases}
\end{align}
for any \VN $v$ and any \CN $c$ of the graph.
\begin{figure}[!h]
\centering
\includegraphics{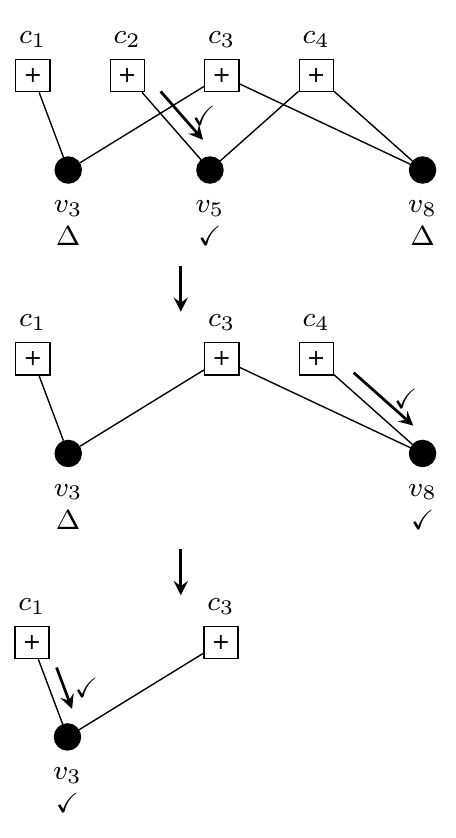}
\caption{Decoding using \SPD.}
\label{fig:24seqPD}
\end{figure}
If any $\msg{c}{v}(\ell),c\in\Neighborsset{v}$ is resolved, then $v$ is resolved and becomes fixed, and all outgoing $\msg{v}{\Neighborsset{v}}(\ell+i)$ stay resolved in further iterations $\ell+i$:
\begin{align}
  \msg{v}{c}(\ell+1)=\begin{cases}
              s, \text{ if }\msg{c'}{v}(\ell)=s\text{ for some } c'\in\Neighborsset{v}\\
              \Delta,\, \text{else.}
             \end{cases}
\end{align}
We call $\msg{v}{c}(\ell+1)=f(\msg{c'}{v}(\ell)),c\in\Neighborsset{v}\setminus\{c'\}$ the messages \emph{passed forwards } from $v$ and $\msg{v}{c}(\ell+1)=f(\msg{c}{v}(\ell))$ the message \emph{fed backwards}.
Note that with \SPD, $\msg{v}{c}(\ell+1)$ is a function of $\msg{c}{v}(\ell)$.
An example of \SPD is illustrated in \figref{fig:24seqPD} for the residual graph shown in \figref{fig:24transmission}.
Since in every iteration the \degone \CN to be resolved is picked randomly, the sequence of residual graphs for several decoding realizations may differ for a given transmission realization.

\subsection{Parallel Peeling Decoder}
\PPD uses a different scheduling than \SPD. 
Instead of resolving only a single \degone \CN per iteration, all available \degone \CNs are resolved.

\begin{figure}[h]
\centering
\includegraphics{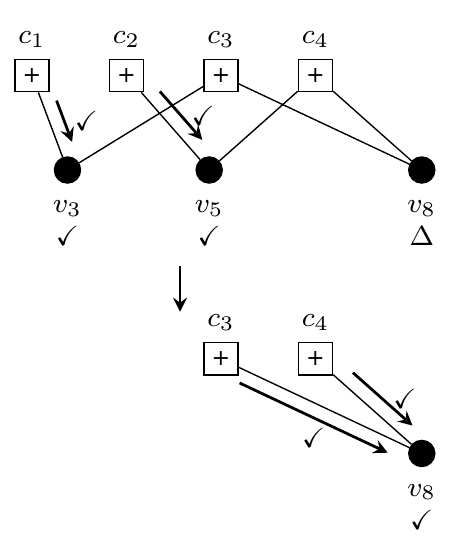}
\caption{Decoding using \PPD.}
\label{fig:24parPD}
\end{figure}

\figref{fig:24parPD} depicts \PPD iterations for the residual graph of \figref{fig:24transmission}.
Note that \PPD is deterministic since for a given transmission realization, all available \degone \CNs are resolved in every step and thus the sequence of residual graphs does not differ.

\subsection{Belief Propagation Decoder}\label{sec:decoders:bp}

For \BP decoding, we apply iterative message passing as described in \cite{Richardson2008,Kschischang2001}.
After initializing the \VNs with their corresponding channel output after transmission, messages are passed from \VNs to their adjacent \CNs and back again as illustrated in the example in \figref{fig:24BP}.
The \CN function is identical to \PPD.
However, every $\msg{v}{c}(\ell)$ from \VN $v$ to any adjacent \CN $c$ depends only on messages received from all other adjacent \CNs in $\Neighborsset{v}\setminus\{c\}$ than $c$:
\begin{align*}
  \msg{v}{c}(\ell+1)=\begin{cases}
              s \text{, if }\msg{c'}{v}(\ell)=s\text{ for some } c'\in\Neighborsset{v}\setminus\{c\}\\
              \Delta, \text{else.}
             \end{cases}
\end{align*}
Since $\msg{v}{c}(\ell+1)$ fed backwards does not depend on $\msg{c}{v}(\ell)$, the messages sent in both directions can differ.
\begin{figure}[t]
\centering
\includegraphics{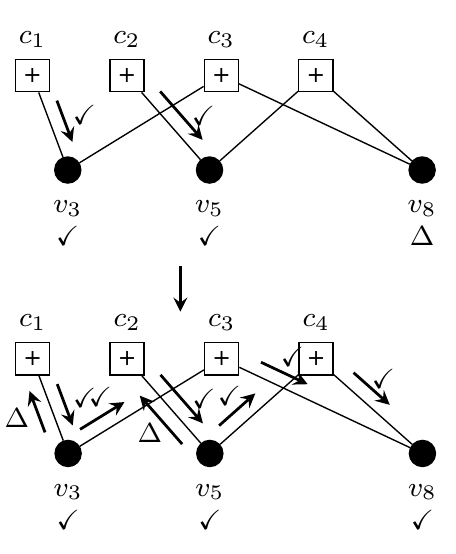}
\caption{Decoding using BP.}
\label{fig:24BP}
\end{figure}
\BP for \figref{fig:24transmission} is depicted in \figref{fig:24BP}. 
Since all other messages are resolved, we only show the messages in the remaining residual graph.
\BP is also deterministic.

\subsection{Equivalence of \SPD and \PPD}\label{sec:decoders:seqpdvsparpd}


We compare \SPD and \PPD.
We define stopping sets and show that with infinitely many iterations, both decoders give the same result.

\begin{definition}[Stopping Set \cite{Di2002}]
A stopping set \S is a subset of the set of all \VNs of the code $\code{C}$ such that all neighbor \CNs of \S are connected to \S at least twice.
\end{definition}

\setcounter{paragraph}{0}
\paragraph{Same result for \SPD and \PPD}
\SPD and \PPD always obtain the same decoding result.
This can be explained intuitively since decoding on the \BEC equals solving a linear system of equations.
Thus, \SPD and \PPD always fail for underdetermined parts of the system of equations,  and therefore obtain the same decoding result as discussed in \cite{Di2002}.

\paragraph{Messages sent within \S stay erased}
Consider using \SPD, \PPD, and \BP. 
The \CN functions of all three decoders are identical and the decoders can resolve a \VN $v$ connected to a \CN $c$ with $\msg{c}{v}(\ell)$ only if all other incoming messages to $c$ from \VNs $v'\in\Neighborsset{c}\setminus\{v\}$ are known in iteration $\ell$. 
All \CNs adjacent to \VNs of \S are connected to unresolved \VNs of \S at least twice.
Thus, none of the decoders is able to resolve any message $\msg{c}{v}(\ell),v\in\S,c\in\Neighborsset{v}$ sent to any of the adjacent \VNs $v\in\Neighborsset{c}$ in any iteration $\ell$.
Note that not only messages passed within a stopping set \S can never be resolved, but also messages sent from \S to the rest of the residual graph can never be resolved during the decoding process.

Note that \PPD consists of a series of (sampled) states of a particular realization of the random \SPD.

\subsection{Equivalence of \PPD and \BP}\label{sec:decoders:equivalence}

According to \cite{Richardson2008}, the erasure probability of every \VN is monotonically decreasing during the decoding process.
Di showed in \cite{Di2002} that an iterative decoder will obtain a specific solution for a given realization of a codeword transmitted over the \BEC. 
However, the behavior of different decoders during decoding is not discussed in detail.


\begin{theorem}[Exactly Equivalent Recovery]
Given any transmission realization over the \BEC, \PPD and \BP recover exactly the same \VNs at each iteration.
\end{theorem}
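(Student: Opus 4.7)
The plan is to prove the claim by strong induction on the iteration index $\ell$. Denoting by $T_{\mathrm{P}}(v)$ and $T_{\mathrm{B}}(v)$ the first iteration at which \VN $v$ becomes resolved under \PPD and \BP respectively, the goal is to show $T_{\mathrm{P}}(v)=T_{\mathrm{B}}(v)$ for every \VN $v$. The base case is trivial, since both decoders are initialized with the same channel output. The key observation used throughout is that the \CN update is literally identical for \PPD and \BP, so the only possible source of divergence is the \emph{extrinsic} nature of the \VN update under \BP: once $v$ is resolved, \PPD sends the resolved value back to every neighbor including the \CN $c^\ast$ that triggered the resolution, whereas \BP's outgoing message $\msg{v}{c}(\ell{+}1)$ is computed without using $\msg{c}{v}(\ell)$.

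For the direction \BP $\Rightarrow$ \PPD, I would assume $v$ is first resolved at iteration $\ell$ under \BP, so $\msg{c}{v}(\ell)=s$ for some $c\in\Neighborsset{v}$. This forces $\msg{v'}{c}(\ell)=s$ under \BP for every $v'\in\Neighborsset{c}\setminus\{v\}$, and the extrinsic \BP rule implies that each such $v'$ is already resolved under \BP by iteration $\ell{-}1$ via some \CN different from $c$. The inductive hypothesis then gives resolution of each such $v'$ under \PPD by iteration $\ell{-}1$, so \CN $c$ becomes degree-one in the \PPD residual graph at iteration $\ell$ and resolves $v$, yielding $T_{\mathrm{P}}(v)\leq\ell$.

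The main obstacle is the converse, because \PPD's \VN-to-\CN message $\msg{v'}{c}(\ell)$ can be $s$ while \BP's is still $\Delta$, so one cannot simply translate \PPD messages into \BP messages. I would argue by contradiction. Suppose $T_{\mathrm{P}}(v)=\ell$ through some \CN $c$, and suppose there exists $v'\in\Neighborsset{c}\setminus\{v\}$ for which $\msg{v'}{c}(\ell)$ is still $\Delta$ under \BP. By the inductive hypothesis $v'$ is \BP-resolved by iteration $\ell{-}1$, so the only way $\msg{v'}{c}(\ell)$ can remain erased is that every incoming \CN message to $v'$ that carries $s$ comes from $c$ itself, i.e., $\msg{c}{v'}(\ell{-}1)=s$ under \BP. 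That in turn forces $\msg{v}{c}(\ell{-}1)=s$ under \BP, which by the extrinsic rule requires $v$ to be \BP-resolved no later than iteration $\ell{-}2$; applying the inductive hypothesis once more then yields $T_{\mathrm{P}}(v)\leq\ell{-}2$, contradicting $T_{\mathrm{P}}(v)=\ell$. Hence $\msg{v'}{c}(\ell)=s$ under \BP for every $v'\in\Neighborsset{c}\setminus\{v\}$, so $\msg{c}{v}(\ell)=s$ and $T_{\mathrm{B}}(v)\leq\ell$. This closes the induction and gives $T_{\mathrm{P}}=T_{\mathrm{B}}$ pointwise, i.e., exact iteration-by-iteration equivalence.
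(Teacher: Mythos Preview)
Your induction argument is correct and constitutes a complete proof, but it proceeds along a genuinely different route than the paper's.

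The paper argues structurally. It partitions messages into \emph{forward} messages (those heading into the portion of the residual graph that is still unresolved) and the single \emph{backward} message $\msg{v}{c}(\ell{+}1)$ that \PPD sends back to the very \CN $c$ that just resolved $v$. The paper first shows, by tracking the classes $\mathcal{C}_1(\ell)$ and $\mathcal{C}_{\geq 2}(\ell)$ of residual \CNs, that forward messages alone determine which \VNs become resolved at each iteration and that both decoders compute identical forward messages. It then observes that the extra backward message under \PPD cannot resolve any additional \VN, since it leaves the current residual graph. Stopping sets are handled by a separate reduction: the \CNs adjacent to a stopping set are replaced by permanently erased incoming messages, and the remaining graph is stopping-set-free.

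Your proof bypasses the forward/backward decomposition and the separate stopping-set case entirely. The direction \BP $\Rightarrow$ \PPD is essentially the same observation both proofs make (the extrinsic rule is stronger, so \BP resolution implies \PPD resolution). For the harder direction \PPD $\Rightarrow$ \BP, your contradiction is tighter than the paper's residual-graph bookkeeping: if some neighbor $v'$ of $c$ were \BP-resolved at time $\ell{-}1$ only through $c$ itself, then $c$ would already carry $\msg{v}{c}(\ell{-}1)=s$ under \BP, and the extrinsic rule then pushes $v$'s \BP-resolution back to $\ell{-}2$, contradicting $T_{\mathrm{P}}(v)=\ell$ via the inductive hypothesis. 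This is an elegant two-step rewind that handles stopping sets automatically (any $v$ in a stopping set simply has $T_{\mathrm{P}}(v)=T_{\mathrm{B}}(v)=\infty$). What the paper's approach buys, by contrast, is the explicit side observation that the backward message under \PPD is operationally irrelevant to decoding progress; this structural fact is implicit but not visible in your argument.
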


The full proof is given in Appendix \ref{sec:appendix:parpdvsbp}.
As an outline, we first show that without stopping sets, both \PPD and \BP recover exactly the same erased \VNs using messages only passed forwards.
We then show that the message fed backwards using the \PPD does not resolve any additional \VNs, and thus the two decoders resolve the same \VNs in every iteration if starting with the same residual graph.

\subsection{Graph Evolution of \degone \CNs During \PPD}\label{sec:parallel_graphvevolution}
Traditionally, the statistical evolution of \degone \CNs during PD is used to analyze the finite-length behavior of  a given code ensemble \cite{Amraoui2009,OlmosPabloM2014}.
In every iteration, \PPD removes every \degone \CN, the respective adjacent \VN and all attached edges, i.e. the probability of removing any \degone \CN during an iteration is $1$ unlike for \SPD.
The normalized \DD is defined as
\begin{align}\label{eq:parallelPDnorm}
 \l_{\v{v}}(\ell)\doteq\frac{\L{\v{v}}(\ell)}{M},\qquad \r_{\v{c}}(\ell)\doteq\frac{\R{\v{c}}(\ell)}{M},
\end{align}
for all $\v{v}\in\F_v, \v{c}\in\overline{\F}_c$.
We obtain the sum of \degone \CNs by 
\begin{align}
c_1(\ell)=\sum_{j=1}^m \r_{\v{e}_j}(\ell).                                   
\end{align}
As for \SPD, the threshold $\epsilon^*$ of an SC-\LDPC code ensemble is given by the maximum $\epsilon$ for which the expected sum of \degone \CNs $\Ex{c_1(\ell)}$ is positive for any $\ell$ during the decoding process, i.e. $\Ex{c_1(\ell)}>0,\ell\in(0,\Omega]$ where $\Omega$ is the stopping time such that all \VNs are recovered.

The average error probability is also dominated by the probability that $c_1(\ell)$ \emph{survives} as discussed in \cite{Amraoui2009}.
We modify and extend the expected graph evolution of the \SPD to adapt it for \PPD and the analysis must take into account solving multiple \degone \CNs in every iteration as explained in Appendix \ref{sec:appendix:meanstepparallel}.
For each iteration, we apply the following steps:
\begin{itemize}
 \item For each \VN, calculate the probability that it is connected to any \degone \CN;
 \item Erase all connected \degone \CNs;
 \item Update all other connected \CNs.
\end{itemize}

\begin{figure}[t]
\centering
\includegraphics{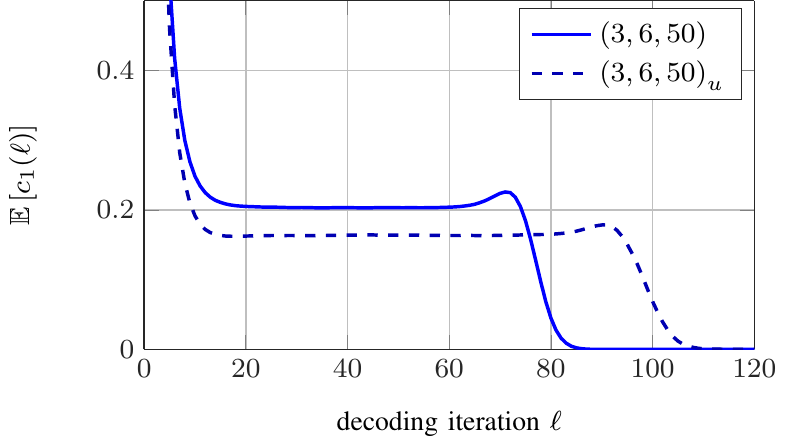}\hspace{2em}
\caption{Simulated $\Ex{c_1(\ell)}$ for \3{50} and $\3{50}_u$ at $\epsilon=0.45$.
}
\label{fig:parallelPD_mean_deg1_randomvsstructured}
\end{figure}

Since we focus only on the graph evolution for protograph-based codes,
\figref{fig:parallelPD_mean_deg1_randomvsstructured} compares simulated results for $\Ex{c_1(\ell)}$ at $\epsilon=0.45$ for the \3{50} and the $\3{50}_u$ ensemble averaged over $5000$ transmissions.
We refer to the phase where $c_1(\ell)$ is constant as \emph{critical} phase, and use for it the symbol $c_1(*)$.
Observe that $c_1(*)$ is lower for the $\3{50}_u$ ensemble, similar to what was observed for \SPD in \cite{Stinner2015a}.
Thus, more iterations are needed to recover all \VNs which results also in a longer critical phase.
Having a longer critical phase with lower $c_1(*)$ also explains why the $\3{50}_u$ ensemble performs worse as observed in \cite{Stinner2015a}.

\section{Towards Finite-Length Analysis for \BP}\label{sec:deg1vsmean} 
The smaller $(\epsilon^*-\epsilon)$ is, the more iterations are needed for decoding.
As in \cite{Ezri08}, we define the normalized time $\tau$ for the \PPD and \BP as 
\begin{align}
\tau=\ell\cdot(\epsilon^*-\epsilon)
\end{align}
so that the previous iteration is $\tau-(\epsilon^*-\epsilon)$.

The initial channel erasure probability of a \VN is denoted with $\epsilon$.
We extend the notation and denote by $\epsilon(\tau)$ the average erasure probability of \VNs in iteration $\tau$.
Note that $\epsilon(\tau)$ can be calculated in two ways.
On one hand, $\epsilon(\tau)$ can be calculated with the evolution described in Section \ref{sec:parallel_graphvevolution}.
On the other hand, we can also apply \DE for \BP which is discussed in Appendix \ref{sec:appendix:DE}.
\DE for the \BEC has low complexity and is a common analysis tool.

\begin{figure}[b]
\centering

\includegraphics{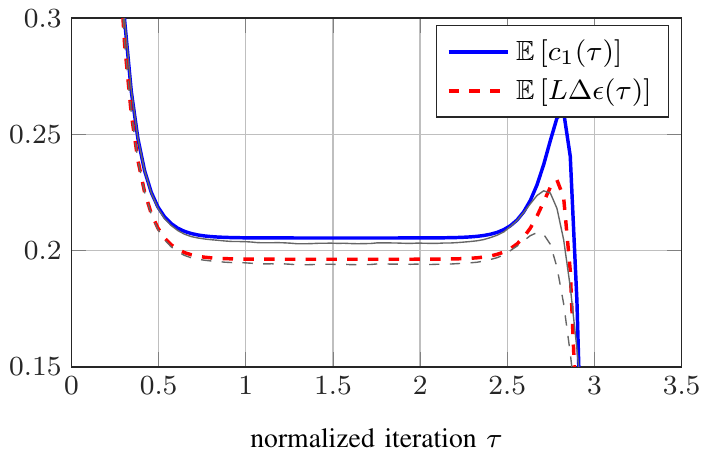}
\caption{ $\Ex{c_1(\tau)}$ and $\Ex{L\Delta\epsilon(\tau)}$ for \3{50}. Simulated results are plotted as gray lines for verification.
}
\label{fig:parallelPD_deg1vsdelta_36}
\end{figure}

In general, the complexity of calculating the \SPD graph evolution is much higher than for \DE.
For each \CN type $\v{c}\in\F_c$, we have $|\overline{\F}_\v{c}|=2^{d_c}$ types after transmission considered during the graph evolution and only $d_c$ erasure probabilities.
In total, there are $|\overline{\F}_c|$ \CN types to track for graph evolution whereas for \DE, we only need to track the erasure probabilities of $m=\sum_{\v{c}\in\F} d_c$ edge types.
As an example, consider again the $\3{3}$ example from \figref{fig:sc-36} with $M=4000$ and $\pe=0.45$.
We have $|\overline{\F}_c|=104$ \CN types in $\overline{\F}_c$ and therefore $104$ \CN types need to be tracked during $\ell=4500$ iterations of the graph evolution.
Using \DE, there are only $18$ erasure probabilities to track for the respective edge types of the code ensemble and we need only $6$ iterations.

Denote with $\Delta\epsilon$ the change of $\pe$ between two consecutive iterations.
In order to compare $c_1(\tau)$ with the decrease of erased \VNs, we propose to study a random process based on $\epsilon(\tau)$ 
\begin{align}
L\Delta\epsilon(\tau)=L[\epsilon(\tau-(\epsilon^*-\epsilon))-\epsilon(\tau)]  \label{eq:delta_epsilon}
\end{align}
which refers to the number of variable nodes resolved per \BP iteration of a \SCLDPC code ensemble with $L$ coupled codes, normalized by $M$ as done in \eqref{eq:parallelPDnorm}.
This decrease in erasure probability is closely related to the convergence \emph{speed} \cite{Aref2013} where the velocity of the propagation wave is measured.
There, in place of the total decrease in erasure probability per iteration, it was analyzed how many iterations $\ell+I,\, I\in\N$, it takes until the erasure probabilities of the \VNs at code position $u+1$ are reduced to the erasure probabilities of the \VNs at positions $u$ in iteration $\ell$.

\figref{fig:parallelPD_deg1vsdelta_36} shows the analytical predictions of $\Ex{c_1(\tau)}$ and $\Ex{L\Delta\epsilon(\tau)}$ verified with $5000$ transmissions for a \3{50} code with $M=4000$.
Simulation and prediction fit accurately.
Observe $\Ex{c_1(\tau)}\geq \Ex{L\Delta\epsilon(\tau)}$ in the critical phase.
This motivates the use of $\Delta\epsilon(\tau)$ as a surrogate for $c_1(\tau)$ and we can state that in fact, 
$c_1(\tau-(\epsilon^*-\epsilon))$ is an upper bound of $L\Delta\epsilon(\tau)$.
\begin{theorem}[Upper Bound of Survival Process]
Assume transmission over a BEC. For any \PPD and \BP process, $c_1(\tau-(\epsilon^*-\epsilon))$ is an upper bound on $L\Delta\epsilon(\tau)$:
\begin{align}
 L\Delta\epsilon(\tau)\leq c_1(\tau-(\epsilon^*-\epsilon)).
\end{align}

\end{theorem}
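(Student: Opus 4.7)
The plan is to reduce the inequality to a simple counting argument about \PPD, and then to transfer the bound to \BP via the equivalence result proved earlier (Theorem on exactly equivalent recovery). Since \PPD and \BP recover exactly the same set of \VNs in every iteration, it suffices to bound the number of \VNs that \PPD resolves in one iteration by the number of \degone \CNs available in the residual graph at the start of that iteration.

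Fix a transmission realization and let $\ell$ be the (discrete) iteration index corresponding to $\tau$, so that $\tau - (\epsilon^*-\epsilon)$ corresponds to $\ell-1$. Let $\mathcal{D}(\ell-1)$ denote the set of \degone \CNs in the residual graph at the end of iteration $\ell-1$, and let $\mathcal{V}(\ell)$ denote the set of \VNs newly resolved by \PPD during iteration $\ell$. By definition of a \degone \CN, each $c\in\mathcal{D}(\ell-1)$ has a unique unresolved neighbor, which I will denote $v(c)$. By the definition of \PPD, a \VN $v$ lies in $\mathcal{V}(\ell)$ if and only if there exists some $c\in\mathcal{N}_v\cap\mathcal{D}(\ell-1)$ with $v(c)=v$; in other words, $\mathcal{V}(\ell)$ is precisely the image of the assignment $c\mapsto v(c)$.

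The key observation is then that this assignment is a well-defined map from $\mathcal{D}(\ell-1)$ \emph{onto} $\mathcal{V}(\ell)$, but in general not injective: two or more \degone \CNs may share the same remaining \VN neighbor, and each such collision results in one fewer distinct \VN resolved than \degone \CN consumed. Hence $|\mathcal{V}(\ell)|\le|\mathcal{D}(\ell-1)|$. Dividing both sides by $M$, using the definitions $c_1(\ell-1)=|\mathcal{D}(\ell-1)|/M$ and $L\Delta\epsilon(\tau)=|\mathcal{V}(\ell)|/M$ (the latter follows because the total number of \VNs scales as $LM$ and $\Delta\epsilon$ tracks the change in the average \VN erasure fraction across one iteration), yields the desired sample-path inequality; passing to expectations is then immediate by monotonicity. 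The equivalence theorem finally transports the bound from \PPD to \BP.

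The only real obstacle is clean bookkeeping: one must be careful that $c_1$ is evaluated at the state preceding the iteration during which the \VNs are resolved (hence the shift by $\epsilon^*-\epsilon$ in normalized time), and that the normalization in the definitions of $c_1$ and $L\Delta\epsilon$ is consistent with the convention that $LM$ counts the total number of \VNs in the \SCLDPC chain. Aside from these indexing points, the argument is purely combinatorial and does not require any statistical machinery such as \DE or graph evolution.
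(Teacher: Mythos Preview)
Your proposal is correct and follows essentially the same approach as the paper: both argue that the \VNs resolved in an iteration are exactly the image of the map sending each \degone \CN to its unique unresolved neighbor, so collisions among \degone \CNs imply $|\mathcal{V}(\ell)|\le|\mathcal{D}(\ell-1)|$, and normalizing by $M$ gives the bound. Your surjection phrasing is slightly more formal than the paper's direct count, but the idea and the handling of the time shift are identical.
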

\begin{proof}
We normalize with respect to $M$ and $\tau$.
$L\Delta\epsilon(\tau)$ \VNs are resolved in an iteration.
\VNs can only be resolved if they are connected to a \degone \CN and every \degone \CN can resolve a single \VN of the residual graph as discussed in Section \ref{sec:decoders:equivalence}.
We know that in iteration $\tau$, $c_1(\tau-(\epsilon^*-\epsilon))$ \degone \CNs will be resolved.
If every of these resolved \CNs is connected to a different \VN, $c_1(\tau-(\epsilon^*-\epsilon))$ \VNs will be resolved.
If $j\geq1$ \degone \CNs are connected to any \VN $v$, $c_1(\tau-(\epsilon^*-\epsilon))-\tfrac{(j-1)}{M}$ \VNs will be resolved.
Thus, it is not possible to resolve more \VNs.
\end{proof}

%
%

\subsection{Mean Evolution During the Decoding Process}


As mentioned before, \DE can be applied in order to compute $\E[\epsilon(\tau)]$ and thus to evaluate the process $\Ex{L\Delta\epsilon(\tau)}$. 
The \DE solution to  $\Ex{L\Delta\epsilon(\tau)}$ can be used not only to characterize the asymptotic behavior, i.e. to compute the decoding threshold of the ensemble, but also to determine  quantities needed to assess the finite-length performance of the code.
Similar to \cite{Amraoui2009}, the average error probability over the ensemble of codes is dominated by the probability that the process $\Delta{\epsilon}(\tau)$ \emph{survives}, i.e. it does not hit the zero plane.
Therefore, characterizing the critical phase and $\Delta{\epsilon}(\tau)$ at that time determines the \SCLDPC finite-length performance.

\subsection{Analysis of the Moments of \texorpdfstring{$\Delta\epsilon(\tau)$}{Delta epsilon}}\label{sec:results_mean}

We approximate the expected fraction of recovered \VNs per coupled code during the critical phase, for $\epsilon$ close to $\epsilon^*$, with 
\begin{align}
 L\Delta\epsilon(\tau)\approx \gamma(\epsilon^*-\epsilon),
\end{align}
where $\gamma$ is a positive constant.
As shown in \figref{fig:delta_eps}, this approximation is reasonable and accurate.

\begin{figure}[t]
\centering
\vspace{0.57em}
\includegraphics{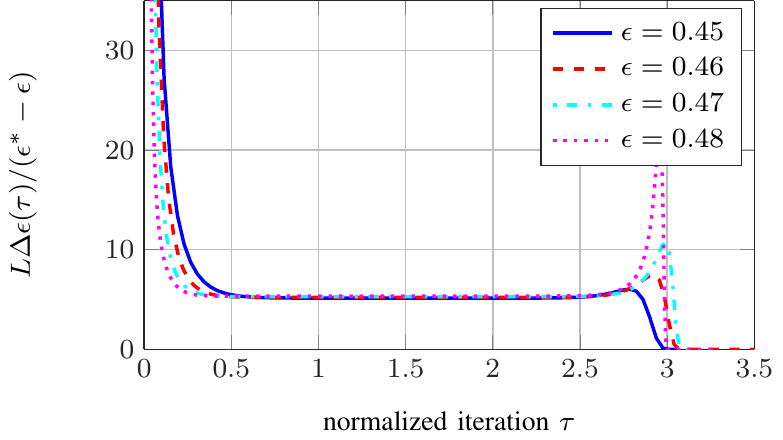}\hspace{2em}
\caption{$\Ex{L\Delta{\epsilon}(\tau)/(\epsilon^*-\epsilon)}$ for the \3{50} ensemble. 
}
\label{fig:delta_eps}
\end{figure}




We approximate $\Var{L\Delta{\epsilon}(\tau)}$ during the decoding process with $\delta=M\Var{L\Delta{\epsilon}(*)}$ during the critical phase.
Observe in \figref{fig:var_covar_delta_eps} (a) that $\delta$ is constant for different $(\epsilon^*-\epsilon)$.



We also examine $\Cov{L\Delta{\epsilon}(\tau),L\Delta{\epsilon}(\tau')}$ during the decoding process.
\figref{fig:var_covar_delta_eps} (b) shows $\Cov{L\Delta\epsilon(\tau),L\Delta\epsilon(\tau')}$ for $\tau'=1.5$ and $\tau'=1.75$ during the critical phase of the decoding.
Observe that the decay of $\Cov{L\Delta{\epsilon}(\tau),L\Delta{\epsilon}(\tau')}$ behaves similar for both values of $\tau'$ during the critical phase.



\begin{figure}[t]
\hspace{-0.6em}\includegraphics{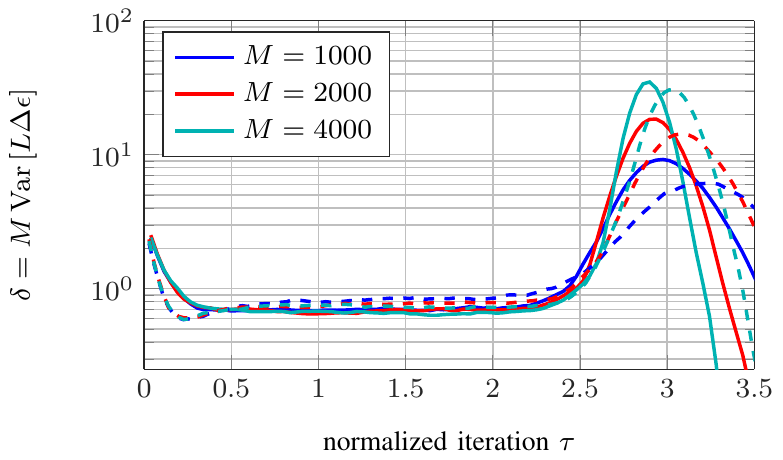}
\\ \centering{(a)} \\
\hspace{-2.45em}\includegraphics{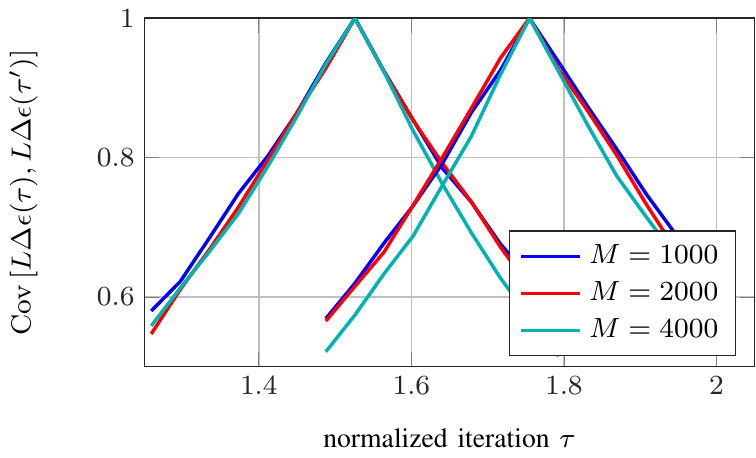}
\\ \centering{(b)}
\caption{In (a), simulated and normalized $M\Var{L\Delta\epsilon(\tau)}$ for the \3{50} ensemble with $M=\{1000,2000,4000\}$ and $\epsilon=0.45$ (solid), respectively $\epsilon=0.46$ (dashed) from $5000$ simulated transmissions.
In (b), simulated $\Cov{L\Delta\epsilon(\tau),L\Delta\epsilon(\tau')}$ for the \3{50} ensemble with $M=\{1000,2000,4000\}$ and $\epsilon=0.45$ for $\tau'=1.5$ and $\tau'=1.75$.
}
\label{fig:var_covar_delta_eps}
\end{figure}

\subsection{Stability of the Process During the Critical Phase}
We now compare the ratios
\begin{align}
 \alpha_{c_1}(\tau)&=\frac{\Ex{c_1(\tau)}}{\sqrt{\Var{c_1(\tau)}}},
 &\alpha_{\Delta\epsilon}(\tau)&=\frac{\Ex{\Delta\epsilon(\tau)}}{\sqrt{\Var{\Delta\epsilon(\tau)}}}.
\end{align}
Note that $\alpha_{L\Delta\epsilon}(\tau)=\alpha_{\Delta\epsilon}(\tau)$ so that the normalization with $L$ is not needed.
The two quantities are plotted for the \3{50} code with $M=4000$ in \figref{fig:compare_alpha}.
Observe that during the whole decoding process, they are very close.
In the critical phase, we have $\alpha_{c_1}=6.049$ and $\alpha_{\Delta\epsilon} = 6.376$ as listed in Table \ref{tab:alphas}.

\begin{figure}[b]
\centering
\includegraphics{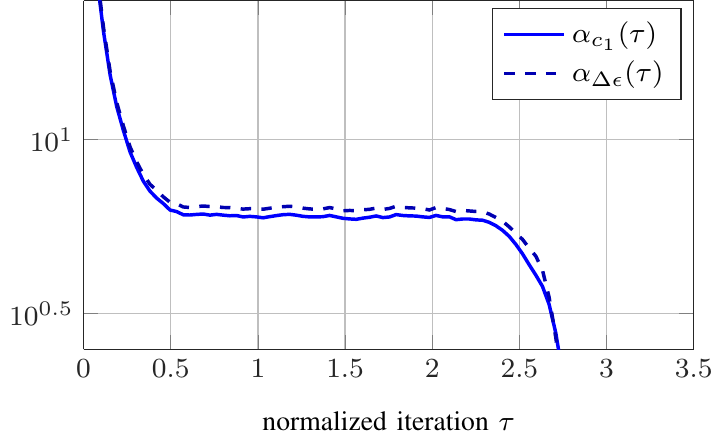}
\caption{ $\alpha_{c_1}(\tau)$ and $\alpha_{\Delta\epsilon}(\tau)$ for \3{50}.
}
\label{fig:compare_alpha}
\end{figure}

\section{Statistical Models for Failure Probabilities}\label{sec:SM}
Given the above results, it is reasonable to follow the argumentation for $c_1(\tau)$ and \SPD in \cite{Amraoui2009} and to model $\Delta\epsilon$ as a Markov process.
For SC-\LDPC codes, mean and variance are constant in the critical phase as for $c_1$ in \cite{OlmosPabloM2014}, and we extend the model of $\Delta\epsilon$  to a Stationary Gaussian Markov process, more specifically to an \OU process.
Since the decoding fails when $\Delta\epsilon(\tau)$ hits $0$ before the decoding finishes, we consider the \FPT across $0$ of such a process.



\subsection{Ornstein-Uhlenbeck Processes }\label{sec:OU} 
Let $t_1<\dots<t_n$.
A stationary Gaussian Markov Process is also called \OU process if $X(t_1),\dots,X(t_n)$ are jointly distributed as a multivariate Gaussian distribution where mean and variance are constant and $\Cov{X(t+T),X(t)}\propto\exp^{-\alpha T}$ with constant $\alpha>0$, and $T>0$.
The \OU process is obtained by adding a state-dependent drift to a Wiener process \cite{Gardiner2009}:
\begin{align}
\d X(t)=-\Theta(\mu - X(t)) \d t+\sigma \d W(t),
\end{align}
where $W(\cdot)$ denotes a standard Wiener process and $X(\cdot)$ is the unknown.
Let the process start in its mean value so that $\mu=X(0)=x_0$.
Using $f(X(t),t)=X(t)\e^{\Theta t}$\cite{I.Karatzas1991}, we have 
\begin{align}
 \!\! X(t)
 = x_0\e^{-\Theta t} + \mu(1-\e^{-\Theta t}) 
 + \e^{-\Theta t} \int_0^t \sigma \e^{\Theta s} \d W(s),
\end{align}
with $\sigma=\sqrt{2b}$ as in \cite{OlmosPabloM2014}. 
Since $x_0$ is a constant, we have
\begin{align}
 X(t)& \sim \GP{x_0}{\frac{b}{\Theta}}
\end{align}
and the expectation of the mean is $\Ex{X(t)}=x_0$.
Assume that $\min(t, u) = t$.
For the covariance, we have 
\begin{align}
 \Cov{X(t),X(u)} 
 &\approx\frac{b}{\Theta} \left( \e^{-\Theta|t-u|}\right)
\end{align}
for sufficiently large $t+u$.
With the observed quantities from Section \ref{sec:results_mean}, we have
\begin{align}
 \Ex{X(t)}&= x_0=\gamma(\epsilon^*-\epsilon),\\
 \Var{X(t)}&= \frac{b}{\Theta}=\frac{\delta}{M}.
\end{align}

\subsection{First-Passage Time Distribution}\label{sec:FPT}
We analyze the average time when $\Delta\epsilon(\tau)$ takes the value $0$ the first time.
Using the symmetry of \OU processes, we change the initial state to $X(0)=0$ and set a fixed boundary $s=x_0$.
We define the \FPT $T_s$ as the time period before the first crossing in $s$:
\begin{align}
T_{s}=\inf_{t\geq 0}\{t:X(t)\geq s\}.
\end{align}
Denote by $\mu_0=\Ex{T_s}$ the mean \FPT from the zero initial state to the boundary $s$.
Evaluating the \pdf of $T_s$ becomes a complex problem when $\tfrac{s}{b/\Theta}\rightarrow 0$ without any existing closed-form expression \cite{OU-4}.
For a reasonably large ratio $\tfrac{s}{b/\Theta}$, it is shown in \cite{OU-3} that the \pdf of the \FPT converges to an exponential distribution
\begin{align}
 \probsub{T_{s}}{t}\sim \frac{1}{\mu_0} \e^{\tfrac{-t}{\mu_0}},\label{eq:tsdist}
\end{align}
and according to \cite{OU-3,OU-1}, $\mu_0$ can be explicitly calculated with
\begin{align}
 \mu_0=&\frac{\sqrt{2\pi}}{\Theta}\int_0^{\tfrac{s}{\sqrt{b/\Theta}}}\Phi(z)\e^{\tfrac{1}{2}z^2}\d z\nonumber\\
 =&\frac{\sqrt{2\pi}}{\Theta}\int_0^{\tfrac{\gamma}{\sqrt{\delta_1}}\sqrt{M}(\epsilon^*-\epsilon)}\Phi(z)\e^{\tfrac{1}{2}z^2}\d z\label{eq:FPT}
\end{align}
where $\Phi(z)$ is the \cdf of the standard Gaussian distribution.
\subsection{The Scaling Law for \texorpdfstring{$\Delta\epsilon(\tau)$}{Delta epsilon}}\label{sec:scalinglaw}
The smaller $(\epsilon^*-\epsilon)$ is, the more iterations are needed for decoding.
For $\epsilon$ very close to $\epsilon^*$, it is reasonable to model the decoding process as a continuous-time process as done in \cite{Amraoui2009} for \SPD. 
Similar to \cite{Amraoui2009}, the average error probability over the ensemble of codes is dominated by the probability that the process $\Delta{\epsilon}(\tau)$ survives, i.e. it does not hit the zero plane.
Therefore, characterizing the critical points and the expected $\Delta{\epsilon}(\tau)$ at that time determines the SC-LDPC finite-length performance.

\begin{figure}[t]
\centering
\setsmallfigheight
\begin{tabular}{c}
\includegraphics{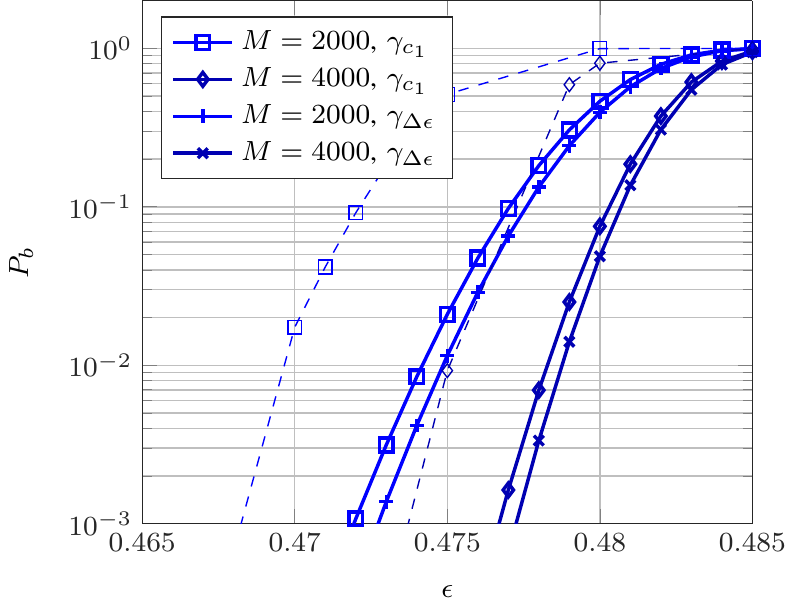}
\\ (a) \\
\includegraphics{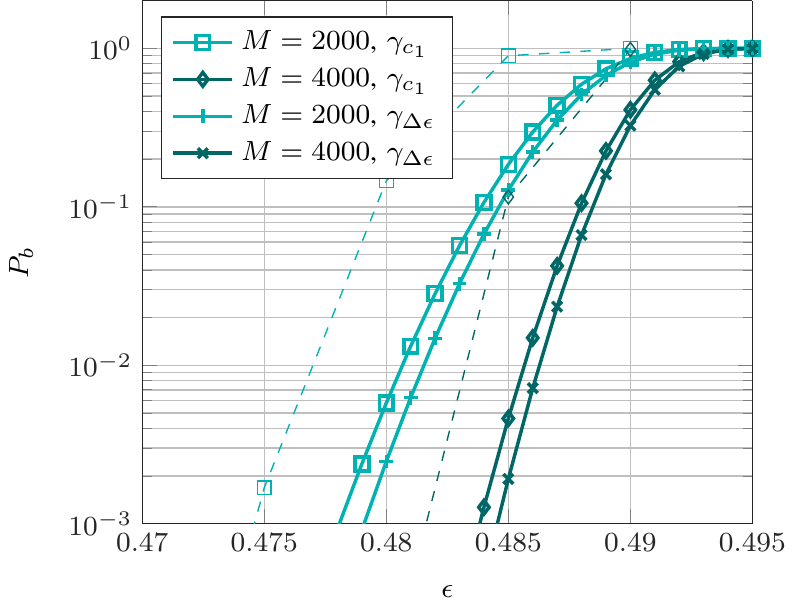}
\\ (b)
\end{tabular}
\caption{
Predicted trajectories for \3{100} in (a) and \4{100} in (b) for $M=\{2000,4000\}$ 
using $\gamma_{c_1}$ and $\gamma_{\Delta\epsilon}$.
Simulated results are included as dashed lines.
}
\label{fig:SL_deg1vsdelta}
\end{figure}

Based on the statistics describing the $\Delta \epsilon(\cdot)$ process, we model its critical phase with an \OU process as described in Section \ref{sec:FPT}.
Denote with $\tau_\text{eff}$ the duration of the critical phase of a decoding process:
\begin{align}
\tau_\text{eff}&=\tau_\text{predicted}-\tau_\text{corr}=t_{\text{predicted}}(\pe^*-\epsilon)-\tau_\text{corr},
\end{align}
where $\tau_\text{predicted}$ is calculated with \DE and $\tau_\text{corr}$ is a correction term taking into account the time before and after the critical phase.
Note that 
$\tau_\text{eff}$ is a function of $L$ and the given ensemble.
The probability of a block error $P_b^*$ equals the probability of a zero-crossing before $\tau_\text{eff}$:
\begin{align}\label{eq:longchain}
P_b^{*}=\Pr(T_s\le \tau_\text{eff})\approx 1-\exp\left(-\frac{\tau_\text{eff}}{\mu_0}\right),
\end{align}
since with \eqref{eq:tsdist}, $T_s$ is approximately exponentially distributed.

\subsection{Comparison to Estimates from \SPD}
The obtained parameters are summarized in Table \ref{tab:alphas}.
\figref{fig:SL_deg1vsdelta} compares the estimates using $c_1(\tau)$ and $\Delta\epsilon(\tau)$ for the \3{100} and \4{100} ensembles. 
The estimate using $\alpha_{c_1}$ is almost identical with the one obtained using \SPD in \cite{Stinner2015a}.
Note that error probability prediction using the $L\Delta\epsilon(\tau)$ provides slightly overconfident estimates, despite it still correctly captures the slope of the simulated error probability curve.
\begin{table}[t]
\centering
\caption{\BEC \BP threshold $\epsilon^*$, $\alpha_{c_1}$, $\alpha_{\Delta\epsilon}$, $\Theta$, $\tau_{\text{eff}}$ and $\tau_{\text{corr}}$ for \SCLDPC code ensembles with $L=100$.
}
\label{tab:alphas}
\begin{tabular}{l|ccccccc}
& $\epsilon^*$ & $\alpha_{c_1}$ & $\alpha_{\Delta\epsilon}$ & $\Theta$ & $\tau_{\text{eff}}$ & $\tau_\text{corr}$\\
\hline
$\3{100}$ & $0.4881$ & $6.049$ & $6.376$ & $2.0$ & $4.73$ & $1.37$ \\
$\4{100}$ & $0.4977$ & $5.112$ & $5.390$ & $2.0$ & $6.96$ & $0.91$ \\
\end{tabular}
\end{table}

%
\section{Summary and Conclusions}
We showed that \acf{PPD} and \acf{BP} resolve exactly the same \acf{VN} in every iteration.
Differences between code ensembles can also be observed with \PPD.
Instead of analyzing $c_1(\tau)$, $\Delta\epsilon(\tau)$ can also be used.
This is a significant complexity reduction since it can be derived from the less complex \BP analysis.
For further research, it will be interesting to obtain further moments of $\Delta\epsilon(\tau)$ analytically.
A more profound link between the bounds of Aref et. al for the convergence speed and $\Delta\epsilon$ in this work would also be interesting.
\section{Acknowledgement}
The authors would like to thank Gerhard Kramer and Michael Lentmaier, for the fruitful discussions.


%
\appendices
\section{Equivalence of \PPD and \BP}\label{sec:appendix:parpdvsbp}
We first show that without stopping sets, both \PPD and \BP recover exactly the same erased \VNs using messages only passed forwards.
We will then show that the message sent back using the \PPD does not resolve any additional \VNs, and thus the two decoders resolve the identical \VNs in every iteration if applied to the same residual graph.

\subsection{Forward Recovery of \VNs}
We do not consider $\msg{v}{c}(\ell+1)$ sent back from \VN $v$ to \CN $c$ as a function of $\msg{c}{v}(\ell)$ so that this analysis holds for both decoders.

Assume a residual graph without any stopping set \S.
Denote by $\Set{C}_i(\ell)$ the set of all \CNs in the residual graph connected to unknown \VNs $i$ times in iteration $\ell$.
We divide all \CNs in the residual graph into classes:
\begin{itemize}
 \item $\Set{C}_1(\ell)$: \CNs connected to unknown \VNs once in iteration $\ell$,
 \item $\Set{C}_{\geq2}(\ell)$: \CNs connected to unknown \VNs two or more times.
\end{itemize}

Since the \CNs in $\Set{C}_1(\ell)$ are connected to only one unknown \VN, both decoders can resolve these \VNs in iteration $\ell$.
\VNs connected only to $\Set{C}_{\geq2}(\ell)$ cannot be resolved by neither of the two decoders.
Therefore, these \VNs do not change their state and remain erased in iteration $\ell$.

\paragraph{First iteration}
Since the residual graph does not contain any stopping set, it must be resolvable with any of these decoders according to \cite{Di2002}.
Thus, there must be at least one \CN $c$ connected to only one erased \VN $v$, i.e. $\Set{C}_1(\ell)$ is non-empty.
An example is depicted in \figref{fig:messagessentintoresidualgraph}.

All other $\msg{v'}{c}(\ell),v'\in\Neighborsset{c}\setminus\{v\}$ apart from $\msg{v}{c}(\ell)$ are known from the transmission.
Since $\msg{c}{v}(\ell)=f(\msg{v'}{c}(\ell))$, $v$ cannot get erased again in any later iterations $\ell+i$.
If there is another \CN $c'\in\Neighborsset{v}\setminus\{c\}$ , $\msg{v}{c'}(\ell+1)$ will be resolved.
Since $\msg{v}{c'}(\ell)$ is resolved as soon as \emph{any} other \CN $\in\Neighborsset{v}$ sends a resolved message to $v$, it will stay resolved independently of the state of the other \CNs $\in\Neighborsset{v}$ in any later iterations $\ell+i$.
If $c'$ has degree $j$ in the first iteration and is connected to $k$ resolved \VNs, $c'\in\Set{C}_{j-k}(2)$ of the following iteration.

\begin{figure}[htb]
\footnotesize
\centering
\includegraphics{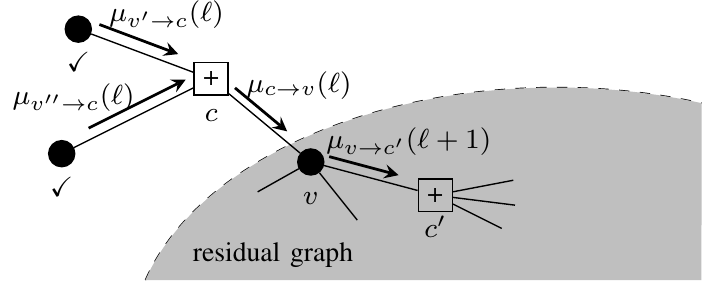}
\caption{\VN $v$ resolved by \CN $v$ forwarding the solution.}
\label{fig:messagessentintoresidualgraph}
\end{figure}

\paragraph{$\ell-th$ iteration}
If $\Set{C}_1(\ell)$ is non-empty and contains any \CN $c$, the connected unknown \VNs are resolved.
If there is any \CN $c'\in\Set{C}_{j}(\ell)$, $j>1$, connected to $k$ such resolved \VNs, $c'$ will be in $\Set{C}_{j-k}(\ell+\!1)$. 
$\msg{c'}{v}(\ell+\!1)$ will again stay resolved since they are a function of $c\in\Set{C}_1(\ell)$ whose messages cannot become erased again. 

If $\Set{C}_1(\ell)$ is empty, the decoding process does not resolve any further \VNs and stops.
The residual graph does not contain any \VNs any more since without any \S, it would contain at least a \CN connected to only one erased \VN. 

\paragraph{Graphs with stopping sets}
The set $\Set{C}_{\geq2}(\ell)$ is defined as the set that contains all \CNs connected to remaining unknown \VNs at least twice.
This set also includes \CNs connected to stopping sets.
Denote \CNs connected to \S with $\overline{\S}$.
Since messages sent from $\overline\S$ to the rest of the residual graph can never be resolved during the decoding process, we replace $\overline\S$ with erased messages sent from $\overline\S$ to the rest of the residual graph as illustrated with an example in \figref{fig:residualstoppingset}.

\begin{figure}[htb]
\footnotesize
\centering
\hspace*{1.28cm}
\includegraphics{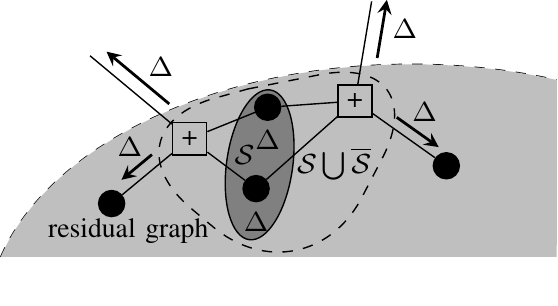}
\caption{Residual graph including $\overline\S$ and \S.}
\label{fig:residualstoppingset}
\end{figure}

The \CNs in $\overline\S$ are a part of $\Set{C}_{\geq2}(\ell)$ and therefore do not change and stay erased during all iterations.
Using any of the two decoders, the decoding of the \VNs of the residual graph proceeds identically until $\Set{C}_1(\ell)$ is empty.

\subsection{Backward messages of \VNs }
The decoding of any part of the residual graph depends only on messages sent forward into the residual graph of the respective iteration.
Since \VNs outside the residual graph are already known, messages sent back outside the residual graph cannot recover any additional \VNs.

\begin{theorem}[Exactly Equivalent Recovery]
\PPD and \BP recover exactly the same \VNs at each iteration.
\end{theorem}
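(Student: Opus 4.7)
The plan is to argue by induction on the iteration index $\ell$, showing that the set of variable nodes resolved in iteration $\ell$ is identical under \PPD and \BP, assuming both start from the same residual graph. The only structural difference between the two decoders is how messages are computed at a \VN: \PPD feeds the incoming message from $c$ back to $c$ in the next iteration, while \BP excludes $c$'s own message from the outgoing message to $c$. So the proof reduces to showing that this backward-feeding has no effect on which \VNs actually get resolved.

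First I would split the analysis by partitioning the check nodes of the residual graph into $\Set{C}_1(\ell)$ (check nodes with exactly one erased neighbor) and $\Set{C}_{\geq 2}(\ell)$ (two or more erased neighbors). The \CN function is identical in both decoders, and a \CN can emit a resolving message to an adjacent \VN only if it lies in $\Set{C}_1(\ell)$. Consequently, in any given iteration, the set of \VNs that can possibly be resolved is exactly the set of unknown \VNs adjacent to $\Set{C}_1(\ell)$, and this set is determined entirely by the current residual graph. As an inductive step, if \PPD and \BP hold the same residual graph at the start of iteration $\ell$, they will attempt to resolve the same \VNs and hence produce the same residual graph at the start of iteration $\ell+1$.

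The only subtle point, and the one requiring care, is to verify that the backward message $\msg{v}{c}(\ell+1)$ that \PPD sends (which depends on $\msg{c}{v}(\ell)$) cannot resolve anything that \BP would miss. The key observation is that if $\msg{c}{v}(\ell)$ is resolved, then $v$ itself was already resolved at iteration $\ell$ (since both decoders mark $v$ as known once any incoming \CN message is known), so the backward message is sent to a \CN $c$ whose contribution to the graph is exclusively through its other neighbors, all of which are handled by the forward messages anyway. In other words, backward propagation only re-informs already-known information and cannot make a previously non-resolvable \VN resolvable in the next iteration. I would spell this out by comparing, at each \VN $v$ and each outgoing edge to $c$, the sets of incoming messages used by the two decoders and showing that they differ at most in a known coordinate.

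The final step is the stopping-set extension, already sketched for the forward half: \CNs adjacent to a stopping set \S remain permanently in $\Set{C}_{\geq 2}(\ell)$ for both decoders by the definition of \S, so they contribute no resolving messages regardless of schedule. Hence the induction above goes through verbatim once one replaces the graph with its residual after removing \S, confirming the equivalence for all graphs. The main obstacle is the bookkeeping in the induction, specifically making precise the claim that \PPD's additional backward message never unlocks a \VN that \BP leaves erased; once that single step is pinned down, the rest is a mechanical iteration-by-iteration matching.
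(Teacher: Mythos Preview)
Your proposal is correct and follows essentially the same approach as the paper: both partition the \CNs into $\Set{C}_1(\ell)$ and $\Set{C}_{\geq 2}(\ell)$, argue iteration by iteration that the forward-resolved \VNs coincide, show the extra backward message of \PPD lands on a \CN whose other neighbors are already known and hence cannot unlock any new \VN, and handle stopping sets by observing that their adjacent \CNs never leave $\Set{C}_{\geq 2}(\ell)$. The paper frames the backward-message step as ``the backward message goes outside the current residual graph'' while you phrase it as a local comparison of incoming message sets differing only in an already-known coordinate, but these are the same observation.
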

\begin{proof}
Using \PPD, the message fed back also changes to resolved once a \VN is known.
The decoding proceeds by resolving the residual graph per iteration which does not depend on the message fed back to the graph outside the residual graph.
Thus, \PPD and \BP recover exactly the same \VNs in every iteration.
\end{proof}

\subsection{Observation for messages fed back}
Using \BP, messages sent back outside the graph are only resolved if a \VN is connected to two \CNs of $\Set{C}_1(\ell)$ in any iteration $\ell$.

\section{Expected Graph Evolution in a Single Iteration of \PPD}\label{sec:appendix:meanstepparallel}
Assume we know the graph \DD  $\{\l_{\v{v}}(\ell), r_{\v{c}}(\ell)\}_{\v{v}\in\mathcal{F}_v, \v{c}\in\overline{\mathcal{F}}_c}$ at a particular time $\ell$.
We compute the expected graph \DD evolution for the next iteration
\begin{align}
\E[R_{\v{c}}(\ell+1)-R_{\v{c}}(\ell)\Big|\{\l_{\v{v}}(\ell), r_{\v{c}}(\ell)\}_{\v{v}\in\F_v, \v{c}\in\overline{\F}_c}].
\end{align}

Using the \PPD, only \degone \CNs can be removed \emph{directly} and in an iteration, we remove all \degone \CNs along with all \VNs connected to them and edges attached to these \VNs. 
Removing these edges resolves also edges for some other \CNs and the \VNs are connected with as well, so that the type of such a \CN is modified from $\v{c}_1\in\overline{\mathcal{F}}_c$ to $\v{c}_2\in\overline{\mathcal{F}}_c$.
We therefore have an \emph{indirect} removal of a \CN of type $\v{c}_1$ from the residual graph and an insertion of a \CN of type $\v{c}_2$.
We assume that by lifting the constructions as described in Section \ref{sec:construction} we obtain independent edges.

The calculation of the graph evolution is outlined as follows:
  \begin{itemize}
  \item For each \VN type, calculate the number of edges of an edge type being resolved \emph{directly}.
  Take into account that all edges connected to this \VN type are in the residual graph and an edge can be resolved by multiple \degone \CNs.
  \item For each non-\degone \CN type in the residual graph, calculate the probability of being connected to any resolved \VN and resolving the connecting edges \emph{indirectly}.
  \end{itemize}

\paragraph{Directly resolved edge types}
Any \degone \CN of type $\v{e}_j,j\in[1,m]$ is directly removed from the graph as depicted in \figref{fig:parallelPDdirectresolving}.
To track the changes in the graph, we change to an edge perspective.
Denote with $\vpdir(\ell)=(\Pdir{1},\dots,\Pdir{m})$ the probability that an edge of a particular type is directly removed from the graph.
Assume that messages along different edges are independent due to the lifting with large $N$.
$\Pdir{j}$ can be calculated with 
\begin{align}\label{eq:direct}
\Pdir{j}=\frac{r_{\v{e}_j}(\ell)}{\displaystyle \sum_{\v{c}:c_j>0} r_{\v{c}}(\ell)},
\end{align}
for $j=1,\ldots,m$, which we define to be $0$ for $r_{\v{e}_j}(\ell)=0$.

\paragraph{Resolved \VN types}
A \VN of type \v{v} is resolved if it is connected to at least one \degone \CN.
Having $\l_\v{v}(\ell)$ \VNs of a specific type, we obtain the number of resolved \VNs of type \v{v} with 
\begin{align}
\l_\v{v}(\ell)\left(1-(\v{1}-\vpdir(\ell))^{\v{v}}\right).
\end{align}
Note that we assume that the probability that an edge type is directly removed is independent of the removal of other edges in the graph.

\paragraph{Indirectly resolved edge types}
Denote with \Pindir{j} the probability that an edge of type $j$ was resolved indirectly.
These edges of type $j$ are not connected to a \degone \CN but to any other possible \CN of type \v{c} in the residual graph with $c_j>0$:
\begin{align}\label{eq:Pindirj}
\Pindir{j}
&=(1-\Pdir{j})\cdot \left(1-(\v{1}-\vpdir(\ell))^{\v{v}-\v{e}_j}\right).
\end{align}
The total number of indirectly resolved edges of type $j$ in the residual graph is obtained with
\begin{align}
\sum_{\v{c}:c_j>0}\r_\v{c}\Pindir{j}.
\end{align}

\paragraph{Indirectly resolved edges of \CN types}
Denote with $\vpindir{\v{c}}(\ell)=(\Pindir{\v{c},1},\dots,\Pindir{\v{c},m})$ the probabilities that a \CN of type \v{c}, $|\v{c}|>1$, is connected to a resolved \VN via a specific edge type $j=1,\dots,m$.
Using \eqref{eq:direct} and \eqref{eq:Pindirj}, we calculate $\Pindir{\v{c},j}$ with
\begin{align}
 \Pindir{\v{c},j}&=\frac{\r_\v{c}}{\displaystyle \sum_{\substack{\v{c}':c_j>0\\|\v{c}'|>1}}\r_{\v{c}'}}
 \cdot\Pindir{j}\nonumber\\
  &=\frac{\r_\v{c}}{\displaystyle \sum_{\substack{\v{c}':c_j>0\\|\v{c}'|>1}}\r_{\v{c}'}}
 \cdot \frac{ \sum_{\v{c}:c_j>0} r_{\v{c}}(\ell)-r_{\v{e}_j}(\ell)}{\displaystyle \sum_{\v{c}:c_j>0} r_{\v{c}}(\ell)}\nonumber\\
 &\cdot\left(1-(\v{1}-\vpdir(\ell))^{\v{v}-\v{e}_j}\right)
 \nonumber\\
 %
 &=\frac{\r_\v{c}}{\displaystyle \sum_{\v{c}':c_j>0}\r_{\v{c}'}}
 \cdot \left(1-(\v{1}-\vpdir(\ell))^{\v{v}-\v{e}_j}\right).
\end{align}
An indirect resolving of an edge connected to a \CN of type \v{c} is illustrated in \figref{fig:parallelPDdirectresolving}.
We can now calculate the number of indirectly resolved \CNs of type \v{c} which we denote with \Eminus{\v{c}}:
\begin{align}\label{eq:ge1s:eindirc}
\Eminus{\v{c}}
&=r_{\v{c}}(\ell)(\vpindir{\v{c}}(\ell))^{\v{c}}.
\end{align}

\begin{figure}[htb]
\footnotesize
\centering
\includegraphics{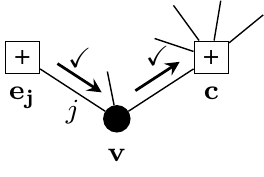}
\caption{\VN \v{v} connected to a \CN of type $e_j$ resolving an edge of type $j$, in turn resolving a socket of a \CN of type $c$.}
\label{fig:parallelPDdirectresolving}
\end{figure}

Since it is also possible that not all edges of a \CN are indirectly resolved, we obtain the fraction of \CNs of type $\v{c}$ reduced to a \CN of type $\v{c}'$ with
\begin{align}
 \Eminus{\v{c},\v{c}'}
 =&r_{\v{c}}(\ell)(\vpindir{\v{c}}(\ell))^{\v{c}-\v{c}'}(\v{1}-\vpindir{\v{c}}(\ell))^{\v{c}'}.\label{eq:ge1s:indirectlychangedtype}
\end{align}
This corresponds to resolving all other edges of \v{c} than
there are remaining in $\v{c}'$.

\pagebreak

\paragraph{Indirectly added \CN types}
\CNs of other types can be added to the residual graph if not all edges of an indirectly resolved \CN are resolved.
From \eqref{eq:ge1s:indirectlychangedtype}, we have
\begin{align}
\Eplus{\v{c}'}=\sum_{\v{c}\in\overline{\F}_c}\Eminus{\v{c},\v{c}'}.
\end{align}

\paragraph{Expected evolution in a single \PPD iteration}
We now calculate the expected evolution of the number of \CNs of type $\v{c}$ in the graph in a single \PPD iteration.
For any \CN type $\v{c}\in\overline{\F}_c$ such that $\v{c}\neq \v{e}_j,j\in[1,m]$, the overall evolution is 
\begin{align}
\E[R_{\v{c}}(\ell+1)-R_{\v{c}}(\ell)]=\Eplus{\v{c}}-\Eminus{\v{c}}.
\end{align}
For any \degone \CN of type $\v{e}_j,j\in[1,m]$, we have 
\begin{align}
\E[R_{\v{e}_j}(\ell+1)-R_{\v{e}_j}(\ell)]=\Eplus{\v{e}_j}-r_{\v{e}_j}(\ell).
\end{align}
\section{Density Evolution for \BP}\label{sec:appendix:DE}
We transmit over a \BEC.
We represent each symbol transmitted with a \VN which is erased with probability $\epsilon$.
Since \PPD and \BP are equivalent, $\Delta\epsilon$ can also be obtained from the analysis of \BP, for which we apply \DE \cite{Richardson2008}. 
Denote the erasure probability of the channel with $\epsilon_{ch}$.
The erasure probability is tracked for each edge type of the protograph separately.
For any edge type $s\in\{1,\dots,m\}$, we denote the erasure probability of messages sent from the \VN in iteration $\ell$ with $x_s(\ell)$ and from the \CN with $y_s(\ell)$.
We denote them as vectors with
\begin{align}
\v{x}(\ell)&=(x_1(\ell),\dots,x_m(\ell)),\\
\v{y}(\ell)&=(y_1(\ell),\dots,y_m(\ell)).
\end{align}
Since we calculate edge type erasure probabilities, we do not consider the additional \CN types after transmission in the residual graph and we consider the initial \CN types $\v{c}\in\F_c$.
Using the constructions described in Section \ref{sec:construction}, the edge type $s$ is only connected to one \VN type $\v{v}\in\F_v$ and one \CN type $\v{c}\in\F_c$.
The \CN and \VN functions can be calculated exactly with
\begin{align}
 y_s(\ell)&=1-(\v{1}-\v{x}(\ell-1))^{\v{c}-\v{0}_{\sim s}},&\v{c}\in\F_c,\\
 x_s(\ell)&=\epsilon\v{y}(\ell)^{\v{v}-\v{0}_{\sim s}},&\v{v}\in\F_v,
\end{align}
where all exponents are $\geq 0$ and punctured \VNs types are initialized with $\epsilon=0$.
Denote the erasure probability of a \VN type $\v{v}\in\F_v$ after the $\ell$-th iteration 
\begin{align}
\epsilon_{\v{v}}(\ell)&=\epsilon_{ch}\v{y}(\ell)^{\v{v}},&\v{v}\in\F_v.
\end{align}
We calculate the erased \VNs per iteration with
\begin{align}
 \epsilon(\ell)=\sum_{\v{v}\in\F_v}\l_\v{v}\epsilon_\v{v}(\ell).
\end{align}
As shown in \cite{Richardson2008}, $\epsilon_{\v{v}}$ for every \VN type $\v{v}\in\F_v$ is monotonically decreasing during the decoding process.

\end{document}